\documentclass[11pt]{article}
\usepackage{graphicx} 
\usepackage{hyperref}
\usepackage{titletoc}
\usepackage{graphicx}
\usepackage{color} 
\usepackage{amsmath, amsfonts, amssymb, amsxtra, amsthm}
\usepackage{mathrsfs}
\usepackage{dsfont}
\usepackage{verbatim}
\usepackage{comment}
\usepackage{enumitem}
\usepackage{tikz}
\theoremstyle{definition}
\newtheorem{definition}{Definition}
\newtheorem{dn}{Definition}
\newtheorem{theorem}[dn]{Theorem}

\newtheorem{corollary}[dn]{Corollary}
\newtheorem{conjecture}[dn]{Conjecture}

\newtheorem{lemma}[dn]{Lemma}

\usepackage{mathtools}

\newcommand{\dist}{\texttt{dist}}
\newcommand{\eps}{\varepsilon}
\newcommand{\Althofer}{Alth\"{o}fer}
\newcommand{\Erdos}{Erd\"{o}s}
\newcommand{\mst}{\texttt{mst}}

\usepackage{multirow}
\usepackage{makecell}
\usepackage[noadjust]{cite}
\usepackage{mdframed}
\usepackage{placeins}
\usepackage[margin=1in]{geometry}

\usepackage{todonotes}

\title{A Lower Bound for Light Spanners in General Graphs\thanks{This work was supported by NSF:AF 2153680.}}
\author{}
\author{Greg Bodwin and Jeremy Flics\\University of Michigan EECS\\\texttt{\{bodwin,jflics\}@umich.edu}}
\date{}

\begin{document}
\maketitle

\begin{abstract}
A recent upper bound by Le and Solomon [STOC '23] has established that every $n$-node graph has a $(1+\eps)(2k-1)$-spanner with lightness $O(\eps^{-1} n^{1/k})$.
This bound is optimal up to its dependence on $\eps$; the remaining open problem is whether this dependence can be improved or perhaps even removed entirely.

We show that the $\eps$-dependence cannot in fact be completely removed.
For constant $k$ and for $\eps := \Theta(n^{-\frac{1}{2k-1}})$, we show a lower bound on lightness of
$$\Omega\left( \eps^{-1/k} n^{1/k} \right).$$
For example, this implies that there are graphs for which any $3$-spanner has lightness $\Omega(n^{2/3})$, improving on the previous lower bound of $\Omega(n^{1/2})$.

An unusual feature of our lower bound is that it is conditional on the girth conjecture with parameter $k-1$ rather than $k$.
We additionally show that this implies certain technical limitations to improving our lower bound further.
In particular, under the same conditional, generalizing our lower bound to all $\eps$ \emph{or} obtaining an optimal $\eps$-dependence are as hard as proving the girth conjecture for all constant $k$.
\end{abstract}

\thispagestyle{empty}
\setcounter{page}{0}
\pagebreak

\section{Introduction}

We study \emph{spanners}, which are fundamental graph sparsifiers with applications in networking, chip design, flow algorithms, etc.~\cite{ABSHJKS20}

\begin{definition} [Spanners~\cite{PU89jacm, PU89sicomp}]
Given an input graph $G$, a $k$-spanner is a subgraph $H$ satisfying $\dist_H(s, t) \le k \cdot \dist_G(s, t)$ for all nodes $s, t$.
\end{definition}

The goal is generally to construct spanners with a favorable tradeoff between their stretch $k$ and their size.
There are two common ways to quantify the size of a spanner $H$.
The first is by its \emph{sparsity}, that is, the goal is to minimize the number of edges $|E(H)|$.
The general stretch/sparsity tradeoff is well understood, thanks to a classic result of \Althofer{} et al.~\cite{ADDJS93}:
\begin{theorem} [\cite{ADDJS93}] \label{thm:unwtdspan}
For all positive integers $n, k$, every $n$-node graph has a $(2k-1)$-spanner on $O(n^{1+1/k})$ edges.
\end{theorem}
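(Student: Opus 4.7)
The plan is to build $H$ via the classic greedy procedure of Althöfer et al.: initialize $H$ as the empty spanning subgraph on $V(G)$, fix any ordering of $E(G)$, and process the edges in that order, adding an edge $(u,v)$ to $H$ if and only if the current $\dist_H(u,v)$ exceeds $2k-1$. I would then verify two properties in sequence.

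\textbf{Stretch.} Any edge $(u,v) \in E(G)$ omitted from $H$ was omitted because $\dist_H(u,v) \le 2k-1$ at the moment of rejection; since $H$ only grows during the execution, this inequality is preserved through termination. Applying it edge-by-edge along any shortest $s$-$t$ path of $G$ yields $\dist_H(s,t) \le (2k-1)\dist_G(s,t)$, so $H$ is a $(2k-1)$-spanner.

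\textbf{Girth.} I claim $H$ has girth strictly greater than $2k$. Suppose for contradiction that $H$ contained a cycle $C$ of length at most $2k$, and let $(u,v)$ be the last edge of $C$ added to $H$. At the instant $(u,v)$ was processed, the remaining edges of $C$ already lay in $H$, supplying a $u$-$v$ path of length at most $2k-1$, contradicting the rule that admitted $(u,v)$.

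\textbf{Edge count.} Finally I would invoke the Moore-style extremal bound that any $n$-vertex graph of girth greater than $2k$ has $O(n^{1+1/k})$ edges. The standard argument first passes to a subgraph $H'$ of minimum degree at least $|E(H)|/n$ by iteratively deleting low-degree vertices, and then performs a BFS to depth $\lfloor k/2 \rfloor$ from any vertex of $H'$; the girth hypothesis forces this BFS to be a tree, which yields $n \ge 1 + d + d(d-1) + \dots + d(d-1)^{\lfloor k/2 \rfloor - 1}$ and hence $d = O(n^{1/k})$. The main obstacle is this extremal step, as the greedy construction and the girth claim are essentially definitional; in the Moore argument the subtlety is handling even $k$ (where one combines BFS trees from both endpoints of an edge) and passing to a large-min-degree subgraph without losing too much in the $O(\cdot)$.
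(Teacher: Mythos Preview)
Your proposal is correct and follows exactly the approach the paper attributes to Alth\"{o}fer et al.: build a greedy $(2k-1)$-spanner, observe it has girth $>2k$, and then apply the Moore bound. One small slip in your sketch: since the girth is $>2k$, the BFS tree is injective out to depth $k$ (not $\lfloor k/2\rfloor$), which is what actually yields $d=O(n^{1/k})$; with depth $\lfloor k/2\rfloor$ you would only get $d=O(n^{2/k})$.
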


There is also a matching lower bound, conditional on the \emph{girth conjecture} of \Erdos{}.
With respect to a parameter $k$, which may be any positive integer, that conjecture is:
\begin{conjecture} [Girth Conjecture~\cite{girth}]
There exists a family of $n$-node graphs with $\Omega(n^{1+1/k})$ edges and girth $>2k$.
(The \emph{girth} of a graph is the least number of edges in any of its cycles.)
\end{conjecture}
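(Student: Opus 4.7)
The plan is to produce an explicit construction of $n$-node graphs with girth exceeding $2k$ and $\Omega(n^{1+1/k})$ edges, since the statement is purely existential. The natural starting point is incidence graphs from finite geometries: the incidence graph of a projective plane of order $q$ has $\Theta(q^2)$ vertices, $\Theta(q^3)$ edges, and girth $6$, which settles the case $k=2$. Incidence graphs of generalized quadrangles and hexagons analogously handle $k=3$ and $k=5$. For general $k$, I would attempt algebraic constructions such as Cayley graphs of suitable nonabelian groups, or arithmetic quotients producing Ramanujan-type graphs, aiming to engineer average degree $\Theta(n^{1/k})$ while forbidding all cycles of length at most $2k$.

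The key steps of the attack would be the following. First, for each target $k$, identify a candidate combinatorial or algebraic structure whose associated graph has the desired parameters. Second, verify the girth bound, which for incidence graphs of generalized polygons reduces to the classical diameter-girth duality, and for Cayley graphs reduces to a statement about the absence of short relations in the group. Third, verify the edge count by a direct counting argument on the structure, exploiting its regularity or transitivity to lower-bound the degree.

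The main obstacle, which is precisely why the conjecture has remained open for decades, is the Feit-Higman theorem: it rules out generalized $m$-gons for $m \notin \{2,3,4,6,8\}$, eliminating the incidence-geometry approach for all but a handful of $k$. Alternative routes stall as well: a random graph of edge density $n^{1/k - 1}$ with short cycles deleted only achieves $\Theta(n^{1+1/(2k-1)})$ edges by standard alteration calculations, and known Ramanujan-type constructions produce girth $\Theta(\log n / \log d)$ rather than the desired $\Theta(k)$, so their edge count is only $n^{1+o(1)}$ for constant $k$. Overcoming these barriers would seem to require genuinely new algebraic machinery, and indeed this is precisely why the present paper treats the conjecture as a hypothesis rather than a lemma to be proved.
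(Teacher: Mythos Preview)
The statement you were asked to address is a \emph{conjecture}, not a theorem, and the paper does not prove it. It is stated precisely so that the paper's main results (Theorem~\ref{thm:main} and Corollary~\ref{cor:fixedk}) can be made conditional on it. There is therefore no ``paper's own proof'' to compare your proposal against.

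You clearly recognize this yourself in your final paragraph, and your survey of the known partial results and obstructions is accurate: the incidence-graph constructions handle $k \in \{2,3,5\}$, Feit--Higman blocks the natural extension, and the probabilistic deletion method falls short by roughly a factor of two in the exponent. So your write-up is not a proof proposal at all but rather a correct explanation of why no proof is currently available. That is the appropriate response here; the only adjustment needed is to frame it as such from the outset rather than as a ``plan'' that later collapses.
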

If one removes any edge $(u, v)$ from a girth conjecture graph, there can be no alternate $u \leadsto v$ path of length $\le 2k-1$ (or else the graph has a short cycle).
Thus, any graph $G$ from the girth conjecture has no $(2k-1)$-spanner except for $G$ itself, which means that $G$ gives a matching lower bound to Theorem \ref{thm:unwtdspan}.
The girth conjecture is confirmed for
$$k \in \{1, 2, 3, 5, \Omega(\log n)\},$$
but it is a major open problem to prove or refute it for any other values of $k$ \cite{Wenger91, Tits59}.

The other popular way to measure the size of a spanner $H$ is by its \emph{total edge weight} $w(H)$.
Since the weight of a spanner can be unbounded, one typically normalizes by the weight of a minimum spanning tree (\mst{}) of the input graph.
This measure is called the \emph{lightness} of the spanner.

\begin{definition} [Spanner Lightness]
The \emph{lightness} of a spanner $H$ of a graph $G$ is the quantity
$$\ell(H \mid G) := \frac{w(H)}{w(\mst(G))}.$$
We will also write $\ell(H) := \ell(H \mid H)$ as a shorthand.
\end{definition}


Lightness has proved much harder to understand than sparsity; after a long line of inquiry (see Table \ref{tbl:priorwork}), the community has only recently obtained a stretch/lightness tradeoff that nearly matches the one known for sparsity, up to an additional $(1+\eps)$ factor in the stretch.
The current upper bound is:
\begin{theorem} [\cite{LS23, Bodwin24}] \label{thm:lightspan}
For all positive integers $n, k$ and all $\eps>0$, every $n$-node graph $G$ has a $(1+\eps) \cdot (2k-1)$-spanner $H$ of lightness
$$\ell(H \mid G) \le O\left(\eps^{-1} \cdot n^{1/k} \right).$$
\end{theorem}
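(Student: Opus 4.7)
The approach I would take is to analyze the \emph{greedy spanner}, modified to use stretch threshold $(1+\eps)(2k-1)$. This algorithm processes edges in order of increasing weight and adds an edge $(u,v)$ of weight $w$ to the spanner $H$ if and only if the current partial $H$ contains no $u \leadsto v$ path of length at most $(1+\eps)(2k-1) \cdot w$. The stretch guarantee is immediate from this rule, so the work lies entirely in bounding $w(H) / w(\mst(G))$.

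The first step is to bucket the edges of $H$ by weight: for each integer $i$, let $E_i$ denote the set of added edges with weight in $[(1+\eps)^i, (1+\eps)^{i+1})$ (after rescaling so that the minimum edge weight is $1$). The multiplicative gap $(1+\eps)$ is chosen precisely so that when an edge $(u,v) \in E_i$ is added, no surviving $u \leadsto v$ path using earlier edges has length at most $(2k-1)(1+\eps)^{i+1}$. Consequently, the slice of $H$ at scale $(1+\eps)^i$ behaves like a girth-$>2k$ graph at the appropriate combinatorial resolution, and an \Althofer{}-style counting argument (essentially Theorem \ref{thm:unwtdspan}) yields $\abs{E_i} \le O\pa{n^{1+1/k}}$.

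The second step converts the per-bucket edge count into a sharp lightness bound. Naively, $w(E_i) \le (1+\eps)^{i+1} \cdot O\pa{n^{1+1/k}}$ summed over $i$ gives only a geometric series in the edge weights, which vastly overcounts $w(\mst(G))$. To close the gap one must charge $w(E_i)$ to MST edges in the same weight range. The plan is to partition $\mst(G)$ into weight classes $T_i$ in parallel, and show via the $(1+\eps)$ slack that the total weight in any window of $\Theta(\eps^{-1})$ consecutive buckets is dominated by $O\pa{n^{1/k}}$ times the MST weight in that window. Summing over all such windows gives the target lightness bound, with the $\eps^{-1}$ factor arising from how many overlapping weight windows may simultaneously charge to a given MST edge.

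The main obstacle is the charging step itself. The bucket-level sparsity bound in step two does not on its own interact with the MST, so the slack parameter $\eps$ must be leveraged through a nontrivial exchange argument: when a greedy edge of weight $\sim W$ is added, the blocked detour path of length $(1+\eps)(2k-1) \cdot W$ must be traced out to expose MST edges of comparable total weight to charge against. Le and Solomon handle this via a hierarchical clustering whose cluster diameter decreases geometrically across levels; Bodwin's alternative route builds a tree cover rooted at the MST and performs a local exchange. I would attempt the tree-cover route as it isolates the MST relationship most cleanly, but in either case the difficulty lies in interleaving the bucketing with a global MST analysis rather than reducing it to a purely local per-bucket count.
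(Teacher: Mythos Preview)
The paper does not prove this theorem. Theorem~\ref{thm:lightspan} is stated as a cited result from \cite{LS23, Bodwin24}, providing the upper-bound backdrop against which the paper's new \emph{lower} bound (Theorem~\ref{thm:main}) is measured; no argument for it appears anywhere in the text. So there is no ``paper's own proof'' to compare your proposal against.

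As for the proposal itself viewed as a standalone sketch: the first step (greedy algorithm, per-bucket girth and hence $O(n^{1+1/k})$ edges per weight class) is standard and essentially correct. The second step, however, is where the entire content of the theorem lives, and you have not actually supplied it --- you describe the \emph{shape} of a charging argument (windows of $\Theta(\eps^{-1})$ buckets, charge to MST weight in the same window) but do not give the mechanism that makes the charge go through. The difficulty is real: the per-bucket Moore bound is stated in terms of $n$, not in terms of the MST weight at that scale, and bridging that gap is precisely what requires the hierarchical clustering of \cite{LS23} or the tree-based potential argument of \cite{Bodwin24}. Your final paragraph acknowledges this and defers to those papers, so what you have written is a reasonable reading guide to the literature rather than a proof.
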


\renewcommand*{\arraystretch}{1.5}

\begin{table}[t]
\begin{center}
\begin{tabular}{llc}
Stretch & Lightness & \textbf{Citation}\\
\hline
$2k-1$ & $O(n/k)$ & \cite{ADDJS93}\\
$(1+\eps) \cdot (2k-1)$ & $O_{\eps}\left(k \cdot n^{1/k}\right)$ & \cite{CDNS92}\\
$(1+\eps) \cdot (2k-1)$ & $O_{\eps}\left(\frac{k}{\log k} \cdot n^{1/k} \right)$ & \cite{ENS15}\\
$(1+\eps) \cdot (2k-1)$ & $O\left( \eps^{-(3+2/k)} n^{1/k} \right)$ & \cite{CW18}\\
$(1+\eps) \cdot (2k-1)$ & $O\left( \eps^{-1} n^{1/k} \right)$ & \cite{LS23} (see also \cite{Bodwin24})\\
\hline
\hline
$2k-1$ & $\Omega\left(n^{1/k}\right)$ & conditional on $\texttt{GC}(k)$ \cite{girth}\\
\makecell[l]{\color{blue}$(1+\eps) \cdot (2k-1)$, for\\\color{blue} fixed $k$ and $\eps = \Theta(n^{-\frac{1}{2k-1}})$} & \color{blue} $\Omega\left( \eps^{-1/k} n^{1/k}\right)$ & \makecell{\color{blue}conditional on $\texttt{GC}(k-1)$\\\color{blue}(this paper)}\\
\end{tabular}
\end{center}
\caption{\label{tbl:priorwork} Summary of prior work on light spanners.  In the lower bounds, $\texttt{GC}(k)$ refers to the girth conjecture with parameter $k$.  See also \cite{FS20} for discussion of existentially optimal spanner algorithms, and \cite{BLW17, BLW19, ADFSW22} and references within for work on light spanners in various special classes of graphs.}
\end{table}

Meanwhile, on the side of lower bounds for lightness, essentially nothing is known.
A girth conjecture graph implies a lower bound of $\Omega(n^{1/k})$ for the lightness of $(2k-1)$-spanners, in exactly the same way as for sparsity.\footnote{By the same argument, an (unweighted) girth conjecture graph $G$ has no nontrivial $(2k-1)$-spanner, and its lightness is $$\ell(G) = \frac{w(G)}{w(\mst(G))} = \frac{|E(G)|}{n-1} = \Omega(n^{1/k}).$$}
But this lower bound does not take advantage of the weighted nature of light spanners, and to date, no lower bounds have been discovered that do leverage edge weights.
So it remains conceivable that \emph{no} $\eps$-dependence is really needed, and that the right bounds for spanner sparsity and lightness are the same, in the sense that girth conjecture graphs are the worst case for each.

The main result of the current paper is that this is not the case, and some $\eps$-dependence in lightness bounds is necessary.
We show:
\begin{theorem} [Main Result] \label{thm:main}
Let $k \ge 2$ be a constant positive integer and assume the girth conjecture with parameter $k-1$.
Then there is a family of $n$-node weighted graphs $G$ for which, letting
$\eps := \Theta(n^{-\frac{1}{2k-1}}),$
any spanner $H$ of stretch $(1+\eps)(2k-1)$ has lightness
$$\ell(H \mid G) \ge \Omega\left(\eps^{-1/k} n^{1/k} \right).$$
\end{theorem}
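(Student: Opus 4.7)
The natural plan, given the use of $\texttt{GC}(k-1)$, is to construct the hard graph from a $\texttt{GC}(k-1)$ graph $G_0$ viewed as a ``heavy skeleton'' and augmented with unit-weight pendant gadgets that keep the MST light. Set $W := \Theta(\eps^{-1}) = \Theta(n^{1/(2k-1)})$ and take $G_0$ on $n_0 = \Theta(n/W)$ nodes with $m_0 = \Theta(n_0^{k/(k-1)})$ edges and girth at least $2k-1$; attach to each $v \in V(G_0)$ a unit-weight pendant path $(v = v_0, v_1, \ldots, v_{W-1})$, and for each edge $e = (u,v) \in E(G_0)$ place a single weight-$W$ edge between $u_{d_e}$ and $v_{d_e}$ for a depth $d_e \in \{0, \ldots, W-1\}$ to be chosen. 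The resulting graph has $n$ nodes, and its MST has weight $\Theta(n)$: all pendant edges plus $n_0 - 1$ heavy edges to connect the base components. If every heavy edge is shown to lie in any $(1+\eps)(2k-1)$-spanner $H$, then $w(H) \ge m_0 W$ and the lightness is $\Omega(m_0 W / n) = \Omega(n_0^{1/(k-1)}) = \Omega(n^{2/(2k-1)}) = \Omega(\eps^{-1/k} n^{1/k})$, as desired.

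The key step is forcing each heavy edge $e = (u,v)$. Note that $\dist_G(u_{d_e}, v_{d_e}) = W$, achieved by the direct edge. Any alternative $u_{d_e} \leadsto v_{d_e}$ path in $G$ must traverse at least $2k-2$ other heavy edges (by the girth of $G_0$), contributing $(2k-2) W$ in weight; in addition, for each shared base vertex $a$ between consecutive heavy edges $e_i, e_{i+1}$ on the walk, the path transits the pendant at $a$ by $|d_{e_i} - d_{e_{i+1}}|$ unit edges, with analogous terms at the endpoints $u, v$. If the depths $\{d_e\}$ are chosen so that every length-$(2k-2)$ alternative walk around $e$ incurs total pendant cost at least $W + \Theta(k)$, the total alternative weight strictly exceeds $(1+\eps)(2k-1) W$ for $\eps = \Theta(1/W)$, and $e$ must appear in $H$. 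Alternative walks using $\ell \ge 2k$ heavy edges are blocked automatically, since $\ell W > (1+\eps)(2k-1) W$ for sufficiently large $W$.

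The main obstacle is producing a depth assignment that simultaneously forces every heavy edge. I expect a probabilistic argument to work: with each $d_e$ chosen independently and uniformly in $\{0, \ldots, W-1\}$, the expected pendant cost along a fixed $(2k-1)$-transition walk is $\Theta(kW)$, comfortably larger than $W$ for $k \ge 3$; a concentration inequality together with a union bound over the $\poly(n)$ length-$O(k)$ walks in $G_0$ should derandomize this into an explicit deterministic assignment. The edge case $k = 2$ is tighter, since the expectation drops to $\Theta(W)$ and the union bound no longer has slack; one workaround is to instantiate $G_0$ as a dense bipartite graph (permitted because $\texttt{GC}(1)$ is trivial), forcing girth at least $4$ and alternative walks of length at least $3$, which loosens the required pendant-cost bound from $\Omega(W)$ down to $\Omega(1)$ and lets the random assignment trivially succeed.
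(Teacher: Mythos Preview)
Your framework—embed a $\texttt{GC}(k-1)$ graph as a heavy skeleton into a light tree and argue that heavy edges are forced—is essentially the paper's strategy, and your arithmetic relating $n_0$, $W$, $\eps$, and the target lightness is correct. However, the depth-assignment scheme you propose does not achieve the stated bound, and the concentration/union-bound sketch does not close.

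The concrete issue is that you assign a \emph{single} depth $d_e$ per heavy edge (both endpoints at the same level). Consider a bad $2k$-cycle of $G_0$ through $e$. The pendant cost is the cyclic sum $\sum_{i}|d_{g_i}-d_{g_{i+1}}|$ over the $2k$ independent depths of the cycle's edges, and you need this sum to exceed a constant $C=\Theta(k)$. But for $2k$ i.i.d.\ uniform values in $[0,W)$, one has $\Pr[\text{cyclic sum}\le C]\ge \Pr[\max-\min\le C/2]=\Theta\bigl((C/W)^{\,2k-1}\bigr)$: there are only $2k$ random values and one translational degree of freedom is wasted. Combining with the $\Theta\bigl(n_0^{k/(k-1)}\bigr)$ cycles through each edge (which is the correct count) and your parameters $W=\Theta(n^{1/(2k-1)})$, $n_0=\Theta(n^{(2k-2)/(2k-1)})$, the expected number of bad cycles through a fixed edge is
\[
\Theta\!\left(n_0^{\,k/(k-1)}\,W^{-(2k-1)}\right)\;=\;\Theta\!\left(n^{\,2k/(2k-1)}\cdot n^{-1}\right)\;=\;\Theta\!\left(n^{1/(2k-1)}\right)\;\to\;\infty,
\]
so neither a union bound nor the weaker ``delete bad edges and keep a constant fraction'' fix works at this $\eps$. (If you re-tune $\eps$ so that this expectation is $O(1)$, the resulting lightness exponent drops to $(2k-1)/(2k^2-2k+1)$, strictly below the target $2/(2k-1)$; e.g., $n^{3/5}$ instead of $n^{2/3}$ for $k=2$.) Your $k\ge 3$ sketch without bipartiteness is worse still: the $(2k-1)$-cycles of $G_0$ require pendant cost $>W+\Theta(k)$, which fails with \emph{constant} probability per cycle, so no union bound over $\poly(n)$ cycles can succeed.

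The paper's construction differs in exactly the place that matters: each heavy edge is mapped to a \emph{pair of independently random} endpoints (one per cluster), so a $2k$-cycle contributes $2k$ independent in-cluster gaps and the bad-cycle probability improves to $\Theta(\eps^{2k})$. With this extra factor of $\eps$, the expected number of bad cycles per edge is $O(1)$ at $\eps=\Theta(n^{-1/(2k-1)})$, and one then deletes the heavy edges lying on bad cycles (a constant fraction) rather than trying to force every edge. The paper also uses bipartiteness of $G_0$ for all $k$, not just $k=2$, to push the minimum alternative length up to $2k-1$ heavy edges, and it further needs a cycle-count lemma to handle alternatives with $2k,2k+2,\ldots$ heavy edges uniformly. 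If you modify your pendant gadget so that each heavy edge chooses its two endpoint depths independently, your construction becomes essentially isomorphic to the paper's.
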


Thus, some $\eps$-dependence is needed in general.
We can also extend our lower bound to some mildly super-constant $k$, with some additional technical nuances; we discuss these in Section \ref{sec:wrapup}.
To further help interpret our main result, it implies the following corollary:
\begin{corollary} \label{cor:fixedk}
For any positive integer $n$ and constant positive integer $k \ge 2$, assuming the girth conjecture with parameter $k-1$, there exists an $n$-node weighted graph $G$ for which any $(2k-1)$-spanner $H$ has lightness
$$\ell(H \mid G) \ge \Omega\left( n^{\frac{1}{k} + \frac{1}{k(2k-1)}} \right).$$
\end{corollary}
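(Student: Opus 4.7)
The plan is to derive this corollary directly from Theorem \ref{thm:main} by a simple substitution, using only the observation that every $(2k-1)$-spanner is also a $(1+\eps)(2k-1)$-spanner for any $\eps \ge 0$.

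First I would invoke Theorem \ref{thm:main} at the specific value $\eps := \Theta\bigl(n^{-1/(2k-1)}\bigr)$ provided by that theorem. This yields an $n$-node weighted graph $G$ with the property that every $(1+\eps)(2k-1)$-spanner $H$ of $G$ satisfies
$$\ell(H \mid G) \ge \Omega\bigl(\eps^{-1/k} n^{1/k}\bigr).$$
Next, I would note that if $H$ is a $(2k-1)$-spanner of $G$, then trivially $\dist_H(s,t) \le (2k-1)\dist_G(s,t) \le (1+\eps)(2k-1) \dist_G(s,t)$ for all $s,t$, so $H$ is also a $(1+\eps)(2k-1)$-spanner. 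Hence the lower bound from Theorem \ref{thm:main} applies to $H$ as well.

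Finally I would substitute the chosen value of $\eps$ into the lightness bound. Since $\eps = \Theta(n^{-1/(2k-1)})$, we have $\eps^{-1/k} = \Theta\bigl(n^{1/(k(2k-1))}\bigr)$, so
$$\ell(H \mid G) \ge \Omega\bigl(\eps^{-1/k} n^{1/k}\bigr) = \Omega\bigl(n^{1/(k(2k-1))} \cdot n^{1/k}\bigr) = \Omega\bigl(n^{\frac{1}{k} + \frac{1}{k(2k-1)}}\bigr),$$
as claimed.

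There is no real obstacle here: the corollary is essentially a restatement of Theorem \ref{thm:main} after specializing $\eps$ and weakening the stretch requirement. All the substantive work lives in the main theorem itself; the only thing to verify is that the arithmetic of the exponents is correct and that one may freely pass from the $(1+\eps)(2k-1)$-spanner regime to the $(2k-1)$-spanner regime, which is immediate from the definition of spanner stretch.
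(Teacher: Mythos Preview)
Your proposal is correct and matches the paper's approach: the paper does not give an explicit proof of this corollary, simply stating that it follows from Theorem \ref{thm:main}, and your substitution of $\eps = \Theta(n^{-1/(2k-1)})$ together with the observation that any $(2k-1)$-spanner is a $(1+\eps)(2k-1)$-spanner is exactly the intended derivation.
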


The additional factor of $n^{\frac{1}{k(2k-1)}}$ in this corollary is the part that improves over the previous lower bound from the girth conjecture.
Considering $k=2$ for example, this corollary states that there are graphs on which any $3$-spanner has lightness $\Omega(n^{2/3})$, improving over the previous lower bound from the girth conjecture of $\Omega(n^{1/2})$.

\subsection{Relationship to the Weighted Girth Conjecture}

The technique used by \Althofer{} et al.\ in their proof of Theorem \ref{thm:unwtdspan} is to show that one can always construct a $(2k-1)$-spanner of girth $>2k$, and then to apply the folklore \emph{Moore bound} from extremal graph theory stating that any graph of girth $>2k$ can have only $O(n^{1+1/k})$ edges.
Elkin, Neiman, and Solomon \cite{ENS15} formalized the analogous approach for the light spanner problem by introducing \emph{weighted girth}:

\begin{definition} [Weighted Girth]
For a cycle $C$ in a weighted graph $G = (V, E, w)$, its normalized weight is the quantity
$$w^*(C) := \frac{w(C)}{\max_{e \in C} w(e)}.$$
The weighted girth of $G$ is the minimum value of $w^*(C)$ over all cycles $C$ in $G$.
\end{definition}

They then established the following reduction:
\begin{theorem} [\cite{ENS15}]
For all $t \ge 1$, every graph has a $t$-spanner of weighted girth $>t+1$.
Moreover, any graph of weighted girth $>t+1$ has no nontrivial $t$-spanner.
\end{theorem}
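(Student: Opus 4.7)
The plan is to prove the two directions separately, using a weighted greedy construction for the existence claim and a short cycle-extraction argument for the converse.

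For the existence direction, I would use the classical weighted greedy algorithm of Alth\"{o}fer et al., adapted to this setting. Sort the edges of $G$ by non-decreasing weight, and process them one by one, maintaining a current subgraph $H$ (initially empty). For each edge $e = (u,v)$ with weight $w(e)$, add $e$ to $H$ iff the current distance $\dist_H(u,v) > t \cdot w(e)$. The $t$-spanner property is immediate: if $e$ is not added, then there is already a path of weight at most $t \cdot w(e)$ between its endpoints, and an inductive argument over increasing edge weights extends this to all pairs. For the weighted girth bound, let $C$ be any cycle in the resulting $H$, and let $e = (u,v)$ be its heaviest edge. Since edges are processed in order of weight, $e$ was the last edge of $C$ to be added, and at that moment the other edges of $C$ already formed a $u$-$v$ path in $H$ of weight $w(C) - w(e)$. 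Since $e$ was added, we had $\dist_H(u,v) > t \cdot w(e)$ at that moment, so
$$w(C) - w(e) \;\ge\; \dist_H(u,v) \;>\; t \cdot w(e),$$
which rearranges to $w(C) > (t+1) \cdot w(e) = (t+1) \cdot \max_{e' \in C} w(e')$, i.e.\ $w^*(C) > t+1$.

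For the converse direction, suppose $G$ has weighted girth $> t+1$ and let $H \subsetneq G$ be any proper subgraph. I would show $H$ fails to be a $t$-spanner by witnessing a pair that is stretched too much. Pick any edge $e = (u,v) \in E(G) \setminus E(H)$, and let $P$ be a shortest $u$-$v$ path in $H$ (if none exists, $H$ already fails as a spanner). Since $P$ is a shortest path it is simple, so $C := P \cup \{e\}$ is a simple cycle in $G$. Applying the weighted girth hypothesis to $C$, and using that $\max_{e' \in C} w(e') \ge w(e)$, gives
$$w(P) + w(e) \;=\; w(C) \;>\; (t+1) \cdot \max_{e' \in C} w(e') \;\ge\; (t+1) \cdot w(e),$$
so $\dist_H(u,v) = w(P) > t \cdot w(e) = t \cdot \dist_G(u,v)$, showing that $H$ is not a $t$-spanner.

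There is no real obstacle here; the only subtle point is ensuring the cycle under consideration is actually simple (handled by taking $P$ to be a shortest path) and correctly identifying $e$ as the heaviest edge in the forward direction (which follows automatically from the weight-ordered processing). Both directions are tight in that the weighted girth strictly exceeds $t+1$, mirroring the unweighted girth/spanner correspondence used in Theorem~\ref{thm:unwtdspan}.
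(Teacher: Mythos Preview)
The paper does not actually prove this theorem; it is quoted from \cite{ENS14} and used as a black-box reduction, so there is no in-paper argument to compare against. Your argument is the standard greedy proof and is correct.

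One small slip in the converse direction: you write $t \cdot w(e) = t \cdot \dist_G(u,v)$, but in general only $\dist_G(u,v) \le w(e)$ holds, since $G$ may contain a shorter $u$--$v$ path than the edge $e$ itself. This does not harm the conclusion, as
\[
\dist_H(u,v) \;>\; t \cdot w(e) \;\ge\; t \cdot \dist_G(u,v)
\]
still witnesses the failure of the spanner condition at $(u,v)$. Similarly, in the forward direction your phrasing ``let $e$ be its heaviest edge'' is fine once ties are broken by the processing order (equivalently, take $e$ to be the \emph{last} edge of $C$ added, which is automatically of maximum weight); you note this yourself at the end.
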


This theorem implies that the question of the lightness needed for a $t$-spanner is completely equivalent to the question of the maximum possible lightness $\ell(H)$ over all graphs of weighted girth $>t+1$.
We will adopt this reframing of the problem in our lower bound construction and analysis.
In particular, to prove Theorem \ref{thm:main}, our goal will be to construct a graph $H$ of weighted girth $>(1+\eps) \cdot 2k$ and lightness
$$\ell(H) \ge \Omega\left( \eps^{-1/k}n^{1/k} \right).$$

Going a bit further, Elkin, Neiman, and Solomon conjectured:
\begin{conjecture} [Weighted Girth Conjecture \cite{ENS15}, c.f.\ Conjecture 1] \label{cjt:wtdgirth}
For all positive integers $n, g$, there is an \textbf{unweighted} $n$-node graph that maximizes lightness over all graphs of weighted girth exactly $g$.
\end{conjecture}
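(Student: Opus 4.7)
The plan is to show that any weighted $n$-node graph $H$ of weighted girth $>g$ and maximum lightness can be transformed into an unweighted $n$-node graph $H'$ of weighted girth $>g$ with $\ell(H') \ge \ell(H)$. I would start from the cycle property of the MST: if $T$ is an MST of $H$, then every non-tree edge $e=(u,v)$ is the heaviest edge on its fundamental cycle, so the weighted girth condition forces
$$w(e) < \frac{p_T(u,v)}{g-1},$$
where $p_T(u,v)$ is the weight of the $u$--$v$ path in $T$. Summing over non-tree edges and double-counting by tree edges gives
$$\ell(H) - 1 \;=\; \frac{\sum_{e\notin T} w(e)}{w(T)} \;<\; \frac{\sum_{e\in T} k_e\, w(e)}{(g-1)\, w(T)},$$
where $k_e$ is the number of fundamental cycles containing the tree edge $e$. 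This should become tight on extremal examples, reducing the extremal question to a purely MST-indexed optimization.

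With that in hand, the next step would be a two-stage weight-equalization argument. In stage one, I would argue that equalizing the weights of MST edges (keeping $w(T)$ fixed) never decreases the maximum achievable lightness: a heavy MST edge very tightly constrains every non-tree edge whose fundamental cycle passes through it, while a light MST edge contributes almost nothing to the allowed non-tree weight budget, so a pairwise smoothing/exchange step should be monotone in $\ell$. In stage two, with all MST edges rescaled to unit weight, push each non-tree edge's weight upward toward the threshold $p_T(u,v)/(g-1)$ imposed by the weighted girth constraint; since non-tree edges appear only in the numerator of lightness, this step is manifestly monotone. If all edges can be driven to a common weight, a final uniform rescaling yields an unweighted graph whose standard girth equals its weighted girth.

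The main obstacle is that local weight modifications can change the identity of the MST: as weights shift, a non-tree edge can swap into $T$ and vice versa, so the clean decomposition $T\cup(E\setminus T)$ used above is not preserved by the equalization dynamics. Handling this likely requires either a careful case analysis keyed to the sequence of MST swaps, or a global reformulation of the extremal problem (for example as an optimization over weight vectors on a fixed edge set) in which one proves that the maximum is attained at vertices of the feasible region corresponding to constant weight vectors. A secondary obstacle is that the equalized configuration may not be realizable as the MST of any graph on exactly $n$ vertices: the reduction may need to allow edge subdivisions, possibly requiring a relaxation of the conjecture to ``$n$ up to lower-order terms,'' or a clever combinatorial encoding that folds subdivisions back into the original vertex set. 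I suspect it is precisely this interaction between the weighted girth constraint and the combinatorial sensitivity of the MST that has kept Conjecture~\ref{cjt:wtdgirth} open.
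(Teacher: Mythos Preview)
The statement you are attempting to prove is a \emph{conjecture}, not a theorem; the paper does not prove it. In fact, the paper explicitly says the opposite: ``Our new lower bound does \textbf{not} refute this conjecture, and in our view, settling it remains the main open problem in the area of light spanners.'' So there is no proof in the paper to compare against, and your proposal is an attack on an open problem rather than a reconstruction of a known argument.

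As for the proposal itself, you have correctly identified where it breaks, and those gaps are genuine rather than cosmetic. Two points are worth making explicit. First, the fundamental-cycle inequality $w(e) < p_T(u,v)/(g-1)$ is only \emph{one} of the weighted-girth constraints on $e$; every cycle through $e$ imposes its own bound, and in a non-tree graph there are exponentially many such cycles, so controlling only the fundamental cycle does not certify that the modified graph still has weighted girth $>g$. Second, even after equalizing the MST edges to unit weight, the thresholds $p_T(u,v)/(g-1)$ for different non-tree edges are just the hop-lengths of different tree paths divided by $g-1$, which are typically all different; so ``push each non-tree edge to its threshold'' does not produce a common weight, and there is no reason the extremal configuration should have all non-tree edges at the same weight. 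Your closing sentence is accurate: these are exactly the difficulties that keep Conjecture~\ref{cjt:wtdgirth} open, and the paper treats it as such.
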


Our new lower bound does \textbf{not} refute this conjecture, and in our view, settling it remains the main open problem in the area of light spanners.
However, we do refute certain stronger versions of the conjecture, which may shed light on how best to attack the weighted girth conjecture going forward.
\begin{corollary} [Refutation of ``Strong'' Weighted Girth Conjectures]
The weighted girth conjecture is false if we instead consider the maximum possible lightness over all graphs of weighted girth strictly larger than $g$, or if we allow $g$ to be non-integral.
\end{corollary}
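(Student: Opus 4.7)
The plan is to use the single weighted graph $H$ furnished by Theorem~\ref{thm:main} (via the Elkin--Neiman--Solomon equivalence) as a witness against both strengthenings simultaneously. Concretely, for every sufficiently large $n$ and $\eps := \Theta(n^{-1/(2k-1)})$, that theorem yields an $n$-node weighted graph $H$ with weighted girth strictly greater than $(1+\eps)\cdot 2k$ and lightness
$$\ell(H) = \Omega\left(\eps^{-1/k} n^{1/k}\right) = \omega(n^{1/k}).$$

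Against this I would pit the best possible unweighted competitor. In any unweighted graph, weighted girth coincides with the ordinary (integer) girth, and lightness equals $|E|/(n-1)$; hence by the Moore bound, every $n$-node unweighted graph of girth at least $2k+1$ has lightness at most $O(n^{1/k})$. So any unweighted graph eligible to oppose $H$ (i.e., meeting the relevant weighted-girth threshold) must in fact have girth $\ge 2k+1$ and therefore falls strictly below $H$ asymptotically.

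Both refutations then fall out by choosing $g$ appropriately. For the strict-inequality version, I would take $g := 2k$: the graph $H$ has weighted girth $>2k$, while any unweighted graph of weighted girth $>2k$ has girth $\ge 2k+1$ and lightness $O(n^{1/k})$, strictly beaten by $H$. For the non-integral version, I would take $g := (1+\eps)\cdot 2k$, which for our $\eps$ and large $n$ lies strictly between the consecutive integers $2k$ and $2k+1$ and is non-integer (a negligible perturbation of $\eps$ can enforce irrationality in the rare case it is exactly rational with a small denominator). Then $H$ has weighted girth $\ge g$, whereas any unweighted graph of weighted girth $\ge g$ has (integer) girth $\ge \lceil g \rceil = 2k+1$ and thus lightness $O(n^{1/k})$.

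There is no hard step here; the only bookkeeping is to verify that $(1+\eps)\cdot 2k \in (2k, 2k+1)$ so that the Moore bound activates precisely at girth $2k+1$, which is automatic from $\eps \to 0$. All the real content lives in Theorem~\ref{thm:main}; the corollary is essentially the observation that a single weighted construction whose weighted girth crosses an integer boundary already defeats any conjecture demanding unweighted optimality at that boundary.
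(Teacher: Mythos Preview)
Your proposal is correct and follows essentially the same approach as the paper's own argument (given informally in the paragraph following the corollary): use the weighted construction from Theorem~\ref{thm:main}, whose weighted girth exceeds $2k$ and whose lightness is $\omega(n^{1/k})$, and compare against the Moore-bound cap of $O(n^{1/k})$ on the lightness of any unweighted graph with girth $\ge 2k+1$. The paper illustrates this only for $k=2$, $g=4$ (where the conditional on $\texttt{GC}(1)$ is vacuous), while you state it for general $k$; the substance is identical.
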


To further explain our point here, let us focus on the setting $g=4$.
It is known that the densest possible graph of unweighted girth exactly $4$ has $\Theta(n^2)$ edges (the biclique), but that the densest possible graph of unweighted girth $>4$ has $\Theta(n^{3/2})$ edges (the finite projective plane incidence graph \cite{Wenger91}).
Our lower bound implies that there are graphs of weighted girth $>4$ and lightness $\Omega(n^{2/3})$, which beats the lightness bound from the latter construction but not the former.
Thus we refute the ``strong'' weighted girth conjectures quoted above but not the original weighted girth conjecture.
A similar effect holds for any even $g \ge 4$.

\subsection{Limitations to Further Progress and Technical Overview}

The $\eps$-dependence in our lower bound is still a far cry from the dependence in the current upper bounds, and so one might hope to improve it further.
Specifically, one might hope to generalize our lower bound to all $\eps$, or to improve the dependence on $\eps$ to match the dependence in the upper bound.
However, due to the fact that our lower bound relies on the girth conjecture with parameter $k-1$ rather than $k$, it turns out that either of these improvements will be difficult to achieve in the sense that they would imply the girth conjecture itself.

\begin{theorem} [Technical Limitations] \label{thm:techlb}
Suppose that Theorem \ref{thm:main} can be improved in \textbf{either} of the following two ways, while still only conditioning on the girth conjecture with parameter $k-1$:
\begin{itemize}
\item It holds for all $0 < \eps < 1$, rather than just $\eps = \Theta(n^{-\frac{1}{2k-1}})$.
\item It holds with a dependence of $\eps^{-1}$ in the lightness lower bound, rather than $\eps^{-1/k}$.
\end{itemize}
Then the girth conjecture is true for all constant $k$.
\end{theorem}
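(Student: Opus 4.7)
The plan is to prove Theorem \ref{thm:techlb} by induction on $k$, with trivial base case GC$(1)$ (witnessed by $K_n$) and an inductive step that uses the hypothesized improvement to promote GC$(k-1)$ to GC$(k)$. Iterating then propagates GC to all constant $k$.

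For the inductive step, I would assume GC$(k-1)$ and apply the hypothesized improvement to Theorem \ref{thm:main}. By the Elkin--Neiman--Solomon equivalence, this produces a weighted $n$-node graph $H$ with weighted girth $> (1+\eps) \cdot 2k$ and lightness $\ell$ exceeding the bound of Theorem \ref{thm:main}. After a standard preprocessing step (WLOG $w_{\min} = 1$ and all weights lie in $[1, \poly(n)]$), I would partition the edges of $H$ into $B = O(\log n / \eps)$ buckets of weight ratio $(1+\eps)$. A cycle of $m$ edges sitting inside a single bucket satisfies $w^*(C) \ge m/(1+\eps)$, so combining with the weighted-girth bound forces $m > (1+\eps)^2 \cdot 2k \ge 2k$; thus each bucket induces a subgraph of unweighted girth $>2k$.

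By pigeonhole, some bucket carries at least $m(H)/B$ edges. Combining $m(H) \ge w(H)/w_{\max} = \ell \cdot w(\mst(H))/w_{\max}$ with the improved lightness, I would compute bucket edge counts: for improvement 2 applied at the same $\eps = \Theta(n^{-1/(2k-1)})$, one gets $\tilde\Omega(n^{1+1/k})$; for improvement 1 applied at an appropriately chosen constant $\eps$ (where the lightness bound $\Omega(\eps^{-1/k} n^{1/k})$ still yields $\Omega(n^{1/k})$ lightness together with bounded bucket count), the same density emerges. In either case, the extracted bucket is an unweighted graph of girth $>2k$ with $\tilde\Omega(n^{1+1/k})$ edges, verifying GC$(k)$ up to polylogarithmic factors which are absorbed into the asymptotic conjecture. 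This closes the induction.

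The main obstacle will be controlling the weight ratio $R = w_{\max}/w_{\min}$ of the extremal graph $H$: the bucketing pays a factor of $R$ (and $\log R$) in the density of the extracted bucket, so the proof must first establish --- via preprocessing, restriction to a single dyadic weight level, or a direct structural argument on the extremal graph --- that $R$ may be taken to be $n^{o(1)}$. A related technical subtlety is calibrating the bucket ratio precisely to $(1+\eps)$ in lockstep with the weighted-girth slack, so that the unweighted girth within a bucket is strictly greater than $2k$ rather than merely $\ge 2k/c$ for some constant $c > 1$. Provided these pieces line up, the above chain delivers GC$(k)$ from GC$(k-1)$ and the induction completes.
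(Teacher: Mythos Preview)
Your high-level plan---induction on $k$, with the inductive step promoting GC$(k-1)$ to GC$(k)$ via the hypothesized improvement---matches the paper's. But your inductive step attempts something much harder than what the paper does, and the obstacle you flag at the end is genuinely fatal to the approach as written.

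You try to extract an unweighted girth-$>2k$ graph directly from the hard instance $H$ by weight-bucketing. This requires controlling the weight ratio $R = w_{\max}/w_{\min}$ of $H$, which you acknowledge but do not resolve. There is no reason $R$ should be $n^{o(1)}$: the hypothesized improvement is a pure existence statement about lightness and says nothing about the weight structure of the extremal graph. Even in the paper's own lower-bound construction one has $R = \eps^{-1}$, and the single non-trivial weight class (the non-SC edges) has only $\Theta(n^{k/(k-1)}) = \Theta(N^{2k/(2k-1)})$ edges on $N$ nodes---strictly fewer than the $N^{1+1/k}$ needed for GC$(k)$ whenever $k \ge 2$. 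So the bucketing route already falls short on the one explicit example available, and your ``provided these pieces line up'' caveat is doing all the work.

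The paper sidesteps extraction entirely. It invokes the known existentially-optimal upper bound (Theorem~\ref{thm:exoptprior}, from \cite{ENS14, LS23}): every $n$-node graph has a $(1+\eps)(2k-1)$-spanner of lightness $O\!\left(\eps^{-1}\,\gamma(n,2k)/n\right)$, where $\gamma(n,2k)$ is the extremal edge count for unweighted girth $>2k$. Sandwiching the hypothesized lower bound against this upper bound gives
\[
\Omega\!\left(\eps^{-1/k} n^{1/k}\right) \;\le\; O\!\left(\eps^{-1}\,\frac{\gamma(n,2k)}{n}\right),
\]
and now either improvement finishes in one line: setting $\eps$ to a constant (improvement 1) or cancelling the $\eps$ factors (improvement 2) rearranges to $\gamma(n,2k) \ge \Omega(n^{1+1/k})$, which is exactly GC$(k)$. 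The idea you are missing is that the upper bound already encodes $\gamma$, so no high-girth graph needs to be exhibited at all.
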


The girth conjecture has not been settled in any new cases since the '50s \cite{Tits59}, and it is regarded as a major open problem in extremal combinatorics and theoretical computer science.
So this barrier to improving our approach is likely a serious one; it will probably require major new ideas to prove the girth conjecture for all constant $k$ (if this is even possible).

In light of this barrier, we will next do some technical overviewing of why our lower bound conditions on the girth conjecture with parameter $k-1$, and the (informal) conceptual challenge in instead conditioning on parameter $k$.
Perhaps the most natural attempt towards a lower bound for light $(1+\eps)(2k-1)$-spanners would be:
\begin{enumerate}
\item Begin with a girth conjecture graph $G$ with parameter $k$,
\item Increase its edge weights to some value $W$, and
\item Fix an $\mst$ with $\Theta(n)$ nodes and edge weight $1$, and then carefully embed the nodes of $G$ somehow into the nodes of the $\mst$.
\end{enumerate}

The lightness of a graph $G$ obtained from this method would be
$$\ell(G) = \Omega(W n^{1/k}).$$
Thus we would get a meaningful lower bound so long as $W$ is a favorable function of $\eps$, \emph{and} if we can argue that the final graph $G$ does not admit a nontrivial $(1+\eps)(2k-1)$-spanner.
But this seems to be quite difficult.
Even though the \emph{initial} graph $G$ has high (weighted or unweighted) girth and thus has no nontrivial spanner, when we embed it into the $\mst$, we create new paths between its nodes that might ruin its high weighted girth.
So we would need to perform this embedding quite carefully.
This would likely require a nuanced structural understanding of $G$, but since $G$ is conjectured rather than explicit, such a structural understanding seems hard to obtain.
Even in the cases where the girth conjecture is confirmed and we do have explicit constructions of $G$, in particular $k \in \{2, 3, 5\}$, it is still not clear if this embedding can be achieved.

The strategy in this paper is to begin with $G$ as a girth conjecture graph with parameter $k-1$ instead of $k$.
This has advantages and disadvantages relative to the method sketched above.
The main disadvantage is that $G$ will begin with many $2k$-cycles, which we are now responsible for destroying during our construction.
But the main advantage is that $G$ begins with some extra density that we can spend throughout the construction.
We essentially spend this by letting our $\mst$ have $\gg n$ nodes.
This allows us to ensure that no \emph{new} problematic cycles are created in the embedding, roughly by including long \emph{spacers} on the $\mst$ that separate the embedding regions of different nodes of $G$.
It also lets us use a random node-splitting method as we embed to destroy the initial $2k$-cycles in $G$.
The only new structural understanding of girth conjecture graphs required by this method is a bound on the initial number of $2k$-cycles in a parameter $k-1$ girth conjecture graph, which we obtain in Section \ref{sec:gcstructure}.

\section{Lower Bound for Light Spanners}

In the introduction we fixed $k$ to be a constant and $\eps$ to be a particular function of $n$ and $k$.
We will perform most of our construction and analysis with respect to arbitrary $k \le O(\log n)$ and $0 < \eps < \frac{1}{2}$, only discussing further restrictions of these parameters once they arise.
We will assume where convenient that $\eps^{-1}$ is integral, which affects our bounds only by lower-order terms.

\subsection{The Structure of Girth Conjecture Graphs \label{sec:gcstructure}}

Before we begin our main construction and analysis, we will need to establish a few structural facts about girth conjecture graphs.

\begin{lemma} \label{lem:gcstructure}
For any parameter $k$, if the girth conjecture holds, then it can be specifically satisfied by a family of graphs $G$ with all of the following properties:
\begin{enumerate}
\item $G$ is bipartite,
\item $G$ is approximately regular (meaning that all nodes have degree $\Theta(d)$ for some parameter $d$), and
\item For any edge $e \in E(G)$ and any integer $c \ge 1$, the number of $2k+2c$ cycles in $G$ that contain $e$ is at most $O\left(n^{\frac{k+2c-1}{k}}\right)$.
\end{enumerate}
\end{lemma}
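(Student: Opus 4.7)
My approach is to start from an arbitrary family $G_0$ of $n$-vertex graphs witnessing the girth conjecture with parameter $k$, and then apply three standard and mutually compatible modifications, each of which preserves the girth lower bound and costs at most polylogarithmic factors in the edge count. For property (1), I would pass to the bipartite double cover of $G_0$: this doubles both the vertex and edge counts while only increasing girth. For property (2), I would use a dyadic-bucketing argument: sort vertices into the $O(\log n)$ buckets $V_i := \{v : 2^i \le \deg(v) < 2^{i+1}\}$, pigeonhole a bucket whose incident edges account for an $\Omega(1/\log n)$ fraction of $|E|$, restrict to that bucket (forcing maximum degree within a factor of two of some value $d$), and iteratively delete any vertex whose current degree falls below $d/4$ to control the minimum degree from below. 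Both operations pass to subgraphs and therefore inherit the girth hypothesis; the polylog loss in the edge count is harmless because the lemma only requires all degrees to be $\Theta(d)$ for \emph{some} parameter $d$.

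The substantive content is property (3), which I plan to prove by a short decomposition argument. Fix an edge $e = (u,v)$ in the approximately $d$-regular bipartite graph $G$ produced above; each $(2k+2c)$-cycle through $e$ corresponds to a $u$-$v$ path $P$ of length $2k+2c-1$ avoiding $e$. I would split $P$ at positions $k$ and $k+2c-1$ into three consecutive subpaths: a prefix of length $k$ from $u$ to some vertex $w$, a middle section of length $2c-1$ from $w$ to some vertex $w'$, and a suffix of length $k$ from $w'$ to $v$. The number of such triples upper-bounds the number of cycles, and I would bound it termwise: at most $O(d^k)$ walks for the prefix, at most $O(d^{2c-1})$ walks for the middle, and, crucially, \emph{at most one} choice for the suffix. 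This last fact is where the girth hypothesis enters: if two distinct paths of length $\le k$ joined the same pair of vertices, then their symmetric difference would be a nonempty Eulerian edge set of at most $2k$ edges, which must contain a cycle of length $\le 2k$, contradicting girth $>2k$. Multiplying gives the advertised $O(d^{k+2c-1}) = O(n^{(k+2c-1)/k})$ bound.

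I do not anticipate major obstacles. Since steps (1) and (2) only pass to subgraphs, both the girth lower bound and the uniqueness of short paths used in step (3) are automatically inherited. The main subtlety is in step (2): the bucketing procedure must be arranged to preserve the bipartition and to avoid destroying too many edges on either side, but this is routine. The polylog loss in $d$ is absorbed into the $O(\cdot)$ of the final cycle bound, since $d^{k+2c-1}$ changes by only polylogarithmic factors under polylog perturbations of $d$.
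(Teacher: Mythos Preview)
Your treatment of property (3) is essentially the paper's argument, just reorganized. Both rely on the key fact that girth $>2k$ forces at most one path of length $\le k$ between any fixed vertex pair. You decompose the $(2k+2c-1)$-path avoiding $e$ as (prefix of length $k$) + (middle of length $2c-1$) + (unique suffix of length $k$), giving $d^{k+2c-1}$ choices; the paper instead fixes a length-$2c$ subpath $\pi$ through $e$ together with an antipodal vertex $v$ (for $n \cdot O(d)^{2c-1}$ choices), after which the two length-$k$ arcs from the endpoints of $\pi$ to $v$ are each unique. With $d = \Theta(n^{1/k})$ both counts collapse to $O(n^{(k+2c-1)/k})$. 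Your bipartite double cover for property (1) also works and differs only cosmetically from the paper's random bipartition.

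The gap is in property (2). The lemma requires the output family to still \emph{witness} the girth conjecture, i.e.\ to retain $\Omega(n^{1+1/k})$ edges on $\Theta(n)$ vertices; downstream the paper uses $|E(G)| = \Theta(n^{1+1/(k-1)})$ explicitly in the lightness calculation. Your bucketing concedes a $\log n$ factor in the edge count, which breaks this requirement. There is also a secondary issue: restricting to a degree bucket $V_i$ controls degrees in $G$, not in the induced subgraph $G[V_i]$, so approximate regularity of the output does not obviously follow from the step as you describe it. The paper sidesteps both problems by \emph{not} passing to a subgraph for the max-degree bound: it iteratively deletes vertices of degree $\le d/4$ and \emph{splits} any vertex of degree $\ge d$ into two copies, partitioning its incident edges equitably between them. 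Splitting can only increase girth, and one checks that both $|V|$ and $|E|$ change by at most constant factors throughout. This node-splitting trick is the missing ingredient in your step (2).
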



    
\begin{proof}
Let $G = (V, E)$ be an arbitrary graph satisfying the girth conjecture.
Enforcing the first two properties (bipartiteness and approximate regularity) is standard, so we recap the proof somewhat briefly here.

First, to enforce bipartiteness, we take a random partition $V = V_1 \cup V_2$ of the vertex set by placing each node in $V_1$ or $V_2$ with equal probability.
Then we delete all edges from $G$ that have both endpoints in $V_1$ or both endpoints in $V_2$.
Each edge survives with probability $1/2$, and so in expectation we remove only half the edges from $E$ in this step, and we have made $G$ bipartite.

Next, to enforce approximate regularity, fix $d = \Theta(n^{1/k})$ as the current average degree of $G$.
While possible, perform either of the following two steps:
\begin{itemize}
\item Find a node of degree $\le d/4$ and delete it from $G$, or
\item Find a node of degree $\ge d$ and split it into two new nodes, with its incident edges equitably partitioned between the new nodes.
\end{itemize}
Since we remove at most $n$ nodes in the first step, we remove at most $n \cdot d/4 \le |E|/2$ edges from $G$ in total, so the total number of edges in $G$ again decreases by at most a constant factor.
When the process is complete the average degree in $G$ is still $\Theta(d)$, with a smaller implicit constant.
Since the number of edges and average degree both change by at most a constant factor, this implies that the number of nodes also changes by only a constant factor, even as nodes are split in the second step.
Thus there are $\Theta(n)$ nodes remaining once the process halts, and so $G$ still satisfies the girth conjecture.

For the last property (counting cycles), let $\pi$ be a path of length $2c$ with $e \in \pi$, and let $v$ be any other node.
The total number of ways to choose $(\pi, v)$ is bounded by
\begin{align*}
&\underbrace{n}_{\text{choose } v} \cdot \underbrace{2c}_{\text{choose position of } e \text{ in } \pi} \cdot \underbrace{O\left(n^{1/k}\right)^{2c-1}}_{\text{extend path by an edge } 2c-1 \text{ times}}\\
\le \ & O\left(cn^{1 + (2c-1)/k}\right)\\
= \ & O\left(cn^{\frac{k+2c-1}{k}}\right).
\end{align*}

We then argue that there can be at most one $(2k+2c)$-cycle that contains $\pi$ as a subpath and $v$ as its antipodal node opposite $\pi$ (with exactly $k$ steps from either endpoint of $\pi$ to $v$).
Let $s, t$ be the endpoint nodes of $\pi$, and notice that any such cycle can be viewed as $\pi$ concatenated with a simple $s \leadsto v$ path of length $k$ and a simple $t \leadsto v$ path of length $k$.
However, there can be at most one simple $s \leadsto v$ (or $t \leadsto v$) path of length $k$, since if there are two such paths then they will imply a cycle of length $\le 2k$, contradicting the girth of $G$.
Thus there can be at most one cycle of length exactly $(2k+2c)$ that contains $\pi$ as a subpath and $v$ as its antipodal node opposite $\pi$, and so our counting of choices for $(\pi, v)$ gives an upper bound for the number of cycles that contain $e$.

Finally, note that any particular $(2k+2c)$-cycle can be witnessed by exactly $2k+2c$ different choices of $(\pi, v)$, since $v$ can be any node on the cycle and $\pi$ is then the path opposite $v$.
So the total number of $(2k+2c)$-cycles is
$$O\left(\frac{c}{c+k} \cdot n^{\frac{k+2c-1}{k}}\right) \le O\left(n^{\frac{k+2c-1}{k}}\right),$$
as claimed.
\end{proof}

\subsection{Lower Bound Construction}

We will next describe our lower bound construction:

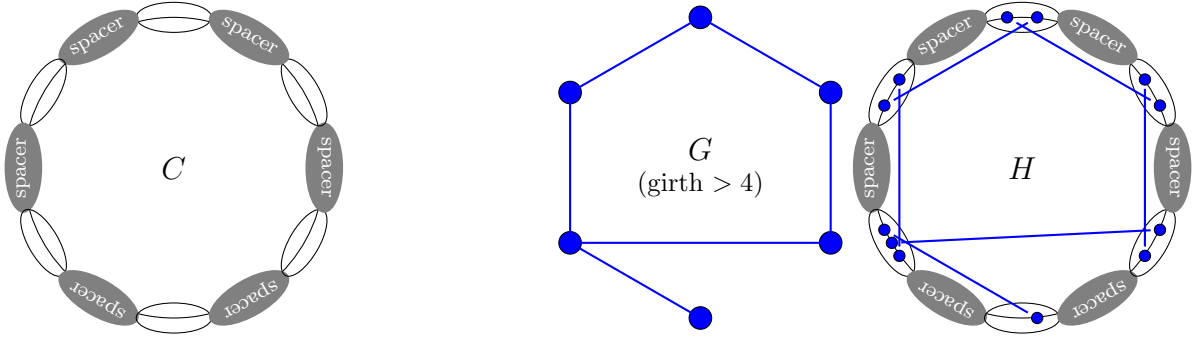
\begin{figure}[h]
\begin{center}
\begin{tikzpicture}
    \draw (0,0) node {\large $C$} circle (2);

    \foreach \i [count=\n from 0] in {0,30,...,330} {
        \ifodd\n
            \fill[gray,rotate=\i] (0,2) node [rotate=\i] {\color{white} \footnotesize spacer} ellipse (0.6 and 0.25);
        \else
            \draw[rotate=\i] (0,2) ellipse (0.5 and 0.2);
        \fi
    }

    \begin{scope}[shift={(7, 0)}]
    \node [align=center] at (0, 0) {\large $G$\\ (girth $>4$)};
    \foreach \i [count=\n from 0] in {0,60,...,300} {
        \draw[rotate=\i, fill=blue] (0,2) node (N\n) {} circle [radius=0.15];
    }
    \draw [blue, thick] (N0) -- (N1) -- (N2) -- (N4) -- (N5) -- (N0);
    \draw [blue, thick] (N2) -- (N3);
    \end{scope}
\end{tikzpicture}
\begin{tikzpicture}
    \draw (0,0) node {\large $H$} circle (2);

    \foreach \i [count=\n from 0] in {0,30,...,330} {
        \ifodd\n
            \fill[gray,rotate=\i] (0,2) node [rotate=\i] {\color{white} \footnotesize spacer} ellipse (0.6 and 0.25);
        \else
            \draw[rotate=\i] (0,2) ellipse (0.5 and 0.2);
            \draw[rotate=\i, fill=blue] (-0.2,2) node (NL\n) {} circle [radius=0.075];
            \ifthenelse{\n < 5 \OR \n > 6}{
                \draw[rotate=\i, fill=blue] (0.2,2) node (NR\n) {} circle [radius=0.075];
            }{}
            \ifthenelse{\n = 4}{
                \draw[rotate=\i, fill=blue] (0,2) node (NM\n) {} circle [radius=0.075];
            }{}
        \fi
    }

    \draw [blue, thick] (NR0) -- (NL2);
    \draw [blue, thick] (NR2) -- (NL4);
    \draw [blue, thick] (NR4) -- (NL6);
    \draw [blue, thick] (NM4) -- (NL8);
    \draw [blue, thick] (NR8) -- (NL10);
    \draw [blue, thick] (NR10) -- (NL0);
\end{tikzpicture}

\end{center}
\caption{We construct our lower bound graph $H$ by mapping the nodes of a girth conjecture graph (with parameter $k-1$) $G$ into the ``clusters'' of a large cycle $C$, which are separated by ample spacers, and then mapping the edges of $G$ to random edges between clusters.}
\end{figure}

\begin{mdframed} [backgroundcolor=gray!20]
\textbf{Lower Bound Construction.}
\begin{itemize}
\item Let $G$ be an $n$-node graph satisfying the girth conjecture with respect to parameter $k-1$, and with the additional properties listed in Lemma \ref{lem:gcstructure} (bipartite, approximately regular, bounded number of $(2k+2c)$-cycles).

\item Let $C$ be a cycle on $N := 4 k \eps^{-1}n$ nodes in which all edges have weight $1$.
We will refer to $C$ as the \emph{spanning cycle}, or sometimes SC for brevity.

\item Partition $C$ into node-disjoint subpaths that alternately contain $k \eps^{-1}$ nodes (called \emph{clusters}) and $3 k \eps^{-1}$ nodes (called \emph{spacers}).
Note that there are $n$ clusters and $n$ spacers in total.

\item Choose an arbitrary correspondence between the nodes of $G$ and the clusters of $C$; let us write $X_v$ for the cluster assigned to a node $v$.

\item For each edge $(u, v) \in E(G)$, place one edge of weight $\eps^{-1}$ between a uniform-random node pair in $X_u \times X_v$.

\item For each cycle $X$ in $G$ of normalized weight $w^*(X) \le (1+\eps)(2k)$, delete all non-SC edges in $X$ from the graph.
\end{itemize}
We will let $H$ be the final graph.
\end{mdframed}

This construction clearly gives a graph $H$ of weighted girth $> (1+\eps)(2k)$, since in the final step we explicitly remove edge(s) from every cycle of smaller normalized weight (and the spanning cycle itself has normalized weight $N \gg (1+\eps) \cdot 2k$).
The lightness of $H$ is given by:
\begin{lemma} \label{lem:hweight}
Suppose that only a constant fraction of the non-SC edges are removed in the final step of the construction, due to participating in cycles of small normalized weight.
Then the lightness of $H$ is
$$\ell(H) = \Omega\left(\eps^{1/(k-1)} \cdot \frac{N^{1/(k-1)}}{k} \right).$$
\end{lemma}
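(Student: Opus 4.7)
The plan is to estimate the numerator and denominator of $\ell(H) = w(H)/w(\mst(H))$ separately and then take their ratio.

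First I would bound $w(\mst(H))$. The spanning cycle $C$ is a Hamiltonian cycle of $H$ on $N$ nodes with all edges of weight $1$, so $H$ is connected and any spanning tree contains $N-1$ edges. Since every edge of $H$ has weight at least $1$, we get $w(\mst(H)) \ge N-1$; and deleting one edge of $C$ yields a spanning tree of weight exactly $N-1$. Hence $w(\mst(H)) = N - 1 = \Theta(k \eps^{-1} n)$.

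Next I would lower bound $w(H)$ by counting surviving non-SC edges, each of weight $\eps^{-1}$. By the girth conjecture with parameter $k-1$, the graph $G$ has $\Omega(n^{1+1/(k-1)})$ edges, and since the construction places exactly one non-SC edge per edge of $G$, the hypothesis that only a constant fraction of non-SC edges are deleted in the final step leaves $\Omega(n^{1+1/(k-1)})$ non-SC edges in $H$. Their total weight is $\Omega(\eps^{-1} n^{1+1/(k-1)})$, and dividing gives
$$\ell(H) \ge \Omega\left( \frac{\eps^{-1} n^{1+1/(k-1)}}{k\eps^{-1} n} \right) = \Omega\left( \frac{n^{1/(k-1)}}{k} \right).$$
Finally, substituting $n = \Theta(\eps N / k)$ yields $n^{1/(k-1)} = \Theta( (\eps/k)^{1/(k-1)} N^{1/(k-1)} )$, so the right-hand side equals $\Theta( \eps^{1/(k-1)} N^{1/(k-1)} / (k \cdot k^{1/(k-1)}) )$, which is $\Omega( \eps^{1/(k-1)} N^{1/(k-1)} / k )$ because $k^{1/(k-1)} = O(1)$ for all $k \ge 2$. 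This matches the target bound.

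Within this lemma the proof is really just a direct calculation, so there is no substantive obstacle inside it. The only nontrivial input is the hypothesis that only a constant fraction of non-SC edges get removed in the final cleanup step, and verifying that hypothesis is where the real work lies---presumably in a subsequent lemma that uses the cycle-counting estimate from Lemma \ref{lem:gcstructure} together with the randomness of the non-SC edge placement to show that few non-SC edges participate in any low-normalized-weight cycle.
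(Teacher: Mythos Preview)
Your proposal is correct and follows essentially the same computation as the paper: bound the \mst{} weight by $\Theta(N)$ via the spanning cycle, bound the non-SC weight by $\Omega(\eps^{-1}n^{1+1/(k-1)})$ using the girth conjecture and the survival hypothesis, divide, and substitute $n=\Theta(\eps N/k)$. Your treatment is in fact slightly more explicit about the \mst{} weight (pinning it down as exactly $N-1$), but the argument is the same.
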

\begin{proof}
Since we assume that only a constant fraction of the weight is removed in the final step, it suffices to calculate the lightness of $H$ just before this step is performed.
This is:

\begin{align*}
        \ell(H) &:= \frac{w(H)}{w(\mst(H))}\\
                &\ge \Omega\left(\frac{w(H \setminus C)}{w(C)}\right)\\
                &= \Theta\left(\frac{\eps^{-1} |E(G)|}{N}\right)\\
                &= \Theta\left(\frac{\eps^{-1} n^{1+1/(k-1)}}{N}\right)\\
                &= \Theta\left(\frac{\eps^{-1} \left( \frac{\eps N}{k}\right)^{1+1/(k-1)}}{N}\right)\\
                &= \Theta\left(\eps^{1/(k-1)} N^{1/(k-1)} k^{-1}\right). \tag*{\qedhere}
    \end{align*}
\end{proof}

Note that this lightness bound is \emph{increasing} with $\eps$, and so currently we would like to choose $\eps$ as large as possible, since this improves both the weighted girth and the lightness of our construction.
The catch is that we need to satisfy the hypothesis that only a constant fraction of the non-SC edges are removed in the final step, and we will see in the following analysis that this requires us to enforce an upper bound on $\eps$.
Thus, our final setting of $\eps$ will be right at this upper bound.

\subsection{Cycle Analysis \label{sec:cyclea}}

In order to reason about the number of edges that are removed in the final step of our construction of $H$, we need some technical lemmas that inspect the structure of the cycles just before this step.
Let $H'$ be the graph just before the last step of the construction is performed (so $H \subseteq H'$).

\begin{lemma} \label{lem:nwedgecount}
Assuming $\eps < \frac{1}{2}$, every cycle $X$ in $H'$ of normalized weight $w^*(X) \le 2k(1+\eps)$ contains at least $2k$ and at most $2k(1+\eps)$ non-SC edges.
\end{lemma}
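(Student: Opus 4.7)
The plan is to decompose any cycle $X$ of small normalized weight into its non-SC edges and the SC-subpaths between them, then use the structure of $C$ and the girth of $G$ separately. Since every non-SC edge has weight $\eps^{-1}$ (the maximum possible) and every SC edge has weight $1$, $X$ must contain at least one non-SC edge---otherwise $X \subseteq C$ forces $X = C$, which has normalized weight $N \gg 2k(1+\eps)$. Writing $a$ and $b$ for the non-SC and SC edge counts of $X$, this gives $w^*(X) = a + b\eps$, and the upper bound $a \le 2k(1+\eps)$ follows immediately from the hypothesis together with $b \ge 0$.

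For the lower bound $a \ge 2k$, I would classify each of the $a$ SC-subpaths of $X$ (one between every pair of consecutive non-SC edges) as either \emph{short}, if it is contained in a single cluster, or \emph{long}, if it crosses a spacer. Because every spacer contains $3k\eps^{-1}$ nodes, even a single long subpath contributes at least $3k\eps^{-1}$ to $b$, making $b\eps \ge 3k$ and hence $w^*(X) \ge a + 3k > 2k(1+\eps)$ for any $\eps < 1/2$, which contradicts the hypothesis. Therefore every SC-subpath in $X$ must be short.

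Once every SC-subpath lies within a single cluster, the ending endpoint of each non-SC edge shares a cluster with the starting endpoint of the next, so projecting the non-SC edges of $X$ onto $G$ produces a closed walk that uses $a$ distinct edges (distinct because $X$ is simple and the construction places exactly one non-SC edge per edge of $G$). The edge set traced by such a walk forms a subgraph of $G$ in which every vertex has even degree, which must decompose into edge-disjoint cycles. By Lemma \ref{lem:gcstructure}, $G$ is bipartite with girth greater than $2(k-1)$, and bipartiteness excludes the odd value $2k-1$, so the girth is at least $2k$ and hence $a \ge 2k$. The arguments are elementary once the short/long dichotomy is in place; the only real conceptual point is recognizing that the spacers are sized precisely so that cycles of small normalized weight are forced to keep their SC-subpaths within individual clusters, letting the girth of $G$ do the remaining work.
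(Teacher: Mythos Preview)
Your proof is correct and follows essentially the same route as the paper's: both express $w^*(X) = a + b\eps$, dispose of the upper bound trivially, use the spacer length $3k\eps^{-1}$ to force all SC-subpaths to stay within single clusters, and then invoke the fact that $G$ is bipartite of girth $>2k-2$ (hence girth $\ge 2k$) to get $a \ge 2k$. The paper phrases the lower-bound half as a contrapositive (fewer than $2k$ non-SC edges forces a long SC-subpath), while you argue directly, but the content is the same. If anything, your closed-walk/Euler-decomposition justification for why the projected non-SC edges must contain a cycle in $G$ is slightly more careful than the paper's one-line ``do not correspond to a cycle'' step, and your explicit note that the spacer argument needs $\eps < 1/2$ makes visible a constraint the paper leaves implicit.
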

\begin{proof}
We show the contrapositive.
If $X$ has $>2k(1+\eps)$ non-SC edges, then its normalized weight is
\begin{align*}
w^*(X) &:= \frac{w(X)}{\max_{e \in X} w(e)}\\
&> \frac{2k(1+\eps) \cdot \eps^{-1}}{\eps^{-1}}\\
&= 2k(1+\eps).
\end{align*}
On the other hand, suppose that $X$ has $<2k$ non-SC edges.
Note that $G$ has girth $\ge 2k$, since (by assumption) it has no cycles of length $\le 2k-2$, and (since it is bipartite) it also has no cycles of length $2k-1$.
Thus the non-SC edges of $X$ do \emph{not} correspond to a cycle in $G$.
That means there are two adjacent non-SC edges $e_1, e_2 \in X$ that do not share a cluster, and so $X$ must include a path of SC edges between them that goes between their respective clusters.
This path includes at least one spacer, and due to this spacer we have
\begin{align*}
w^*(X) &:= \frac{w(X)}{\max_{e \in X} w(e)}\\
&\ge \frac{3k \eps^{-1}}{\eps^{-1}}\\
&> 2k(1+\eps). \tag*{since we assume $\eps < \frac{1}{2}$. \qedhere}
\end{align*}
\end{proof}

In the following, for any cycle $X$ in $G$, we will say that the \emph{corresponding} cycle in $H'$ is the one that contains the image of each edge in $X$, as well as the shortest SC path between any two adjacent edge-images joining their endpoints.

\begin{lemma} \label{lem:cyclesurvival}
Let $c \ge 0$ be an integer and let $X$ be a $(2k+2c)$-cycle in $G$.
Then the probability that the corresponding cycle $X'$ in $H'$ has normalized weight $\le (1+\eps)2k$ is at most
$$\frac{\Theta(\eps)^{2k+2c}}{(2k+2c)!}.$$
\end{lemma}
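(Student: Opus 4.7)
My plan is to decompose $w^*(X')$ into a deterministic contribution from the non-SC edges and a random contribution from the SC paths, and then control the probability that the random part is unusually small via a simplex-count bound.

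First I would write the normalized weight as
$$w^*(X') = (2k+2c) + \eps \cdot S,$$
where $S$ is the total number of SC edges used in $X'$. This holds because $X'$ consists of $2k+2c$ non-SC edges of weight $\eps^{-1}$ together with $S$ SC edges of weight $1$, and the max edge weight is $\eps^{-1}$. The event $w^*(X') \le 2k(1+\eps)$ is thus equivalent to $S \le 2k - 2c/\eps$; for $c > k\eps$ this is impossible (so the claim is trivial), and otherwise it forces $S \le 2k$.

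Next I would decompose $S$ into a sum of independent random variables. Writing $v_1, \ldots, v_{2k+2c}$ for the nodes of $X$ in cyclic order and letting $(a_i, b_i) \in X_{v_i} \times X_{v_{i+1}}$ be the endpoints of the non-SC image $e_i'$ of edge $(v_i, v_{i+1})$, we have $S = \sum_i d_i$, where $d_i$ denotes the SC distance within $X_{v_i}$ between $b_{i-1}$ and $a_i$. Each edge of $G$ is placed at a uniform random pair in a product of clusters, so all the $a_i$'s and $b_i$'s are jointly independent uniform samples from their respective clusters. Crucially, although consecutive $d_i$'s both involve the placement of $e_i'$, they depend on disjoint coordinates of it ($a_i$ versus $b_i$), so the $d_i$'s end up mutually independent. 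Each $d_i$ is then an absolute difference of two iid uniforms on a set of size $L := k\eps^{-1}$, whose pointwise probability mass is uniformly bounded by $2/L = 2\eps/k$.

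Finally, setting $m := 2k+2c$ and $T := \lfloor 2k - 2c/\eps \rfloor \le 2k$, I would bound
$$\Pr[S \le T] \;\le\; (2\eps/k)^m \cdot \binom{T+m}{m} \;\le\; (2\eps/k)^m \cdot \frac{(T+m)^m}{m!} \;=\; \frac{\Theta(\eps)^{2k+2c}}{(2k+2c)!},$$
where the first inequality sums the pointwise mass bound $\prod_i \Pr[d_i = t_i] \le (2\eps/k)^m$ over the $\binom{T+m}{m}$ nonnegative integer tuples $(t_1, \ldots, t_m)$ with $\sum t_i \le T$, and the last step uses $T+m \le 4k+2c = O(k)$ (since $c \le k\eps \le k$).

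The delicate point I expect is the independence argument for the $d_i$'s — specifically, justifying that consecutive $d_i$'s remain independent despite sharing the edge placement $e_i'$, by leveraging that a uniform random element of a product equals independent uniforms on each factor. After that, the weight decomposition, the $2/L$ density bound, and the hockey-stick count of nonnegative integer tuples are all routine.
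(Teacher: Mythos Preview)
Your proposal is correct and follows essentially the same route as the paper: both reduce the event $w^*(X') \le (1+\eps)2k$ to $S \le 2k - 2c\eps^{-1}$, bound the per-cluster displacement probability by $2/(k\eps^{-1})$, and count admissible tuples via stars-and-bars to obtain $\Theta(\eps)^{2k+2c}/(2k+2c)!$. Your treatment of independence (separating the two coordinates $a_i, b_i$ of each edge placement and noting the $d_i$'s depend on disjoint subsets of these) is in fact more explicit than the paper's, which simply conditions on ``the first endpoint mapped to each cluster'' without elaborating.
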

\begin{proof}
Let $\sigma$ be the total number of SC edges used by $X'$.
The normalized weight of $X'$ is
\begin{align*}
w^*(X') &:= \frac{w(X')}{\max_{e \in X'} w(e)}\\
&= \frac{(2k+2c)\eps^{-1} + \sigma}{\eps^{-1}}\\
&= 2k + 2c + \eps \sigma.
\end{align*}
This is $\le (1+\eps)2k$ iff $\sigma \le 2k - 2c \eps^{-1}$.
So our goal is to bound the probability that the random mapping of the edges of $X'$ into $C$ yields $2k-2c\eps^{-1}$ edges in total, along the SC paths joining their endpoints.

As a first step, let us count the number of ways to partition exactly $2k - 2c \eps^{-1}$ steps among $2k+2c$ clusters, plus a ``remainder'' bucket (to account for the possibility that fewer steps are actually used within clusteres).
This is a standard application of the stars-and-bars counting formula, which gives
$$\binom{2k - 2c \eps^{-1} + 2k + 2c}{2k + 2c} \le \frac{(4k)^{2k+2c}}{(2k+2c)!}$$
ways.
For each such partition, the probability that it is realized by the mapping of $X$ into $H'$ can be bounded by treating the first endpoint mapped to each cluster as fixed, and then acknowledging that there are at most two nodes to which we could map the second endpoint in that cluster that will cause exactly the selected number of in-cluster steps to be used.
So the probability is at most
$$\left(\frac{2}{k \eps^{-1}}\right)^{2k+2c}.$$
Multiplying the previous two terms, we get a bound of
\begin{align*}
\frac{\Theta(\eps)^{2k+2c}}{(2k+2c)!}. \tag*{\qedhere}
\end{align*}
\end{proof}

\begin{lemma} \label{lem:killBigCycle}
For any integer $0 \le c \le k$, if
$$\eps \le O\left( k \cdot n^{-\frac{k+2c}{(k-1)(2k+2c)}} \right)$$
with a small enough implicit constant, then for each non-SC edge $e$, the expected number of cycles $X'$ in $H'$ that have $e \in X'$, exactly $2k+2c$ non-SC edges, and normalized weight $w^*(X') \le (1+\eps)2k$ is at most $\frac{1}{4^{c+1}}$.
\end{lemma}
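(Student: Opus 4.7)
The plan is to combine the cycle-count bound of Lemma \ref{lem:gcstructure}(3) with the per-cycle survival probability of Lemma \ref{lem:cyclesurvival}, using linearity of expectation.

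The first step is to establish a correspondence: every cycle $X'$ in $H'$ satisfying the three conditions of the lemma (passes through $e$, has exactly $2k+2c$ non-SC edges, and has $w^*(X') \le (1+\eps)2k$) arises from some $(2k+2c)$-cycle $X$ in $G$ containing the underlying edge of $e$. Reusing the reasoning from Lemma \ref{lem:nwedgecount}, the normalized-weight constraint forces every SC segment of $X'$ to lie entirely within a single cluster, since a single spacer would contribute $3k\eps^{-1}$ to $w(X')$, already pushing $w^*(X')$ above $2k(1+\eps)$. Thus consecutive non-SC edges of $X'$ share a cluster, and the non-SC edges project to a closed trail of length $2k+2c$ in $G$ through the underlying edge of $e$. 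Up to constants one may treat this as a simple $(2k+2c)$-cycle in $G$: either by absorbing the mild overcount from trails with repeated vertices into the constants, or by noting that such trails impose additional distinctness constraints on the random placement in $H'$ that only further shrink the survival probability below the Lemma \ref{lem:cyclesurvival} bound.

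The second step is to count. Applying Lemma \ref{lem:gcstructure}(3) to $G$ with girth parameter $k-1$ and substituting $c' = c+1 \ge 1$, the number of $(2k+2c)$-cycles in $G$ through any fixed edge is at most $O\!\left(n^{(k+2c)/(k-1)}\right)$. Combined with the per-cycle survival probability $\Theta(\eps)^{2k+2c}/(2k+2c)!$ from Lemma \ref{lem:cyclesurvival}, linearity of expectation yields
$$\mathbb{E}[\text{count}] \;\le\; O\!\left( n^{\frac{k+2c}{k-1}} \right) \cdot \frac{\Theta(\eps)^{2k+2c}}{(2k+2c)!}.$$

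The third step is the routine verification that the hypothesis on $\eps$ suffices. Plugging in $\eps \le \alpha k \cdot n^{-(k+2c)/((k-1)(2k+2c))}$ for a small absolute constant $\alpha>0$ exactly cancels the $n$-factor and leaves $(\Theta(\alpha) k)^{2k+2c}/(2k+2c)!$. A Stirling bound $(2k+2c)! \ge ((2k+2c)/e)^{2k+2c}$, together with the trivial inequality $k/(k+c) \le 1$, reduces this to at most $(\Theta(\alpha))^{2k+2c}$. Choosing $\alpha$ small enough makes each factor at most $1/4$, so the whole expression is bounded by $4^{-(2k+2c)} \le 4^{-(c+1)}$, as required (using $k \ge 1$, so $2k+2c \ge c+1$).

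The only delicate point is the first step, namely pinning down the $H'$-to-$G$ correspondence so that overcounting from trails that revisit vertices does not blow up the count. The remaining steps are a direct combination of the two preceding lemmas followed by standard algebra and Stirling's approximation.
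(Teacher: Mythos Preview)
Your proof is correct and follows essentially the same approach as the paper: combine the cycle count from Lemma~\ref{lem:gcstructure} (applied with girth parameter $k-1$) with the per-cycle survival probability from Lemma~\ref{lem:cyclesurvival} via linearity of expectation, then solve the resulting inequality for $\eps$. You are more explicit than the paper about the $H'$-to-$G$ correspondence in your first step and about the Stirling-type simplification in your third step, but the underlying argument is identical.
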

\begin{proof}
First, let us bound the number of $(2k+2c)$-cycles in $G$.
Recall that $G$ is a girth conjecture graph with parameter $k-1$.
So, instead viewing these as $(2(k-1)+2(c+1))$-cycles and then applying Lemma \ref{lem:gcstructure}, the number of $(2k+2c)$-cycles in $G$ that contain $e$ is bounded by
$$O\left(n^{\frac{k+2c}{k-1}}\right).$$
By Lemma \ref{lem:cyclesurvival}, the probability that each of these cycles in $G$ is mapped to a cycle of small normalized weight in $H'$ is at most
$$\frac{\Theta(\eps)^{2k+2c}}{(2k+2c)!}.$$
We would like to set $\eps$ sufficiently small that the expected number of such cycles is at most $1/4^{c+1}$; that is,
$$O\left(n^{\frac{k+2c}{k-1}}\right) \cdot \frac{\Theta(\eps)^{2k+2c}}{(2k+2c)!} \le \frac{1}{4^{c+1}}.$$
Solving for $\eps$, we get
\begin{align*}
n^{\frac{k+2c}{k-1}} \cdot \Theta(\eps)^{2k+2c} &\le O\left((2k+2c)!\right)\\
\Theta(\eps)^{2k+2c} &\le O\left( (2k+2c)! \cdot n^{-\frac{k+2c}{k-1}} \right)\\
\eps &\le O\left( (2k+2c)!^{\frac{1}{2k+2c}} \cdot n^{-\frac{k+2c}{(k-1)(2k+2c)}} \right)\\
\eps &\le O\left( k \cdot n^{-\frac{k+2c}{(k-1)(2k+2c)}} \right). \tag*{\qedhere}
\end{align*}
\end{proof}

\subsection{Proof Wrapup \label{sec:wrapup}}

The previous lemma dictates our final choice of $\eps$, but note that it expresses its bound as a function of $n$ (the number of nodes in $G$), rather than $N$ (the number of nodes in $H$).
To rearrange the bound as a function of $N$, we calculate:
\begin{align*}
\eps &= \Theta\left( k \cdot n^{-\frac{k+2c}{(k-1)(2k+2c)}} \right)\\
\eps &= \Theta\left( k \cdot \left(\frac{\eps N}{k}\right)^{-\frac{k+2c}{(k-1)(2k+2c)}} \right)\\
\eps^{1 + \frac{k+2c}{(k-1)(2k+2c)}} &= \Theta\left( k \cdot N^{-\frac{k+2c}{(k-1)(2k+2c)}} \cdot k^{\Theta(1/k)}\right)\\
\eps^{\frac{2k^2 + 2kc - k}{(k-1)(2k+2c)}} &= \Theta\left( k \cdot N^{-\frac{k+2c}{(k-1)(2k+2c)}} \right)\\
\eps^{2k^2 + 2kc - k} &= \Theta\left( k^{(k-1)(2k+2c)} \cdot N^{-(k+2c)} \right)\\
\eps &= \Theta\left( k^{\frac{(k-1)(2k+2c)}{{2k^2 + 2kc - k}}} \cdot N^{-\frac{(k+2c)}{2k^2 + 2kc - k}} \right).
\end{align*}

To wrap up the proof, we need to look more carefully at the setting of $c$.
Notice that Lemma \ref{lem:nwedgecount} gives an upper bound on the values of $c$ that we need to consider: that is, so long as $c > k \eps$, there are \emph{no} cycles that use $2k+2c$ non-SC edges and which have weighted girth $\le (1+\eps) \cdot 2k$.
So the appropriate setting of $\eps$ is given by the above formula, with $c := \lfloor k \eps \rfloor$.

Under this setting, for any edge $e$, by Lemma \ref{lem:killBigCycle} the expected number of cycles in $H'$ that contain $e$ and have normalized weight $\le (1+\eps)\cdot 2k$ is bounded by
$$\sum \limits_{i=0}^{\lfloor k \eps \rfloor} \frac{1}{4^{i+1}} \le \frac{1}{2}.$$
By Markov's inequality, this implies that $e$ survives in the final step of our construction with constant probability, and so the conditions of Lemma \ref{lem:hweight} are satisfied and the analysis is complete.

Finally, let us acknowledge that our original main result was phrased more restrictively (constant $k$) but also had a simpler-looking setting of $\eps$.
For fixed $k$, our setting of $\eps$ is $\ll 1/k$, which means that the appropriate setting of $\eps$ is the one that takes $c=0$.
With this restriction in place, we can continue to simplify our setting of $\eps$:
\begin{align*}
\eps &= \Theta\left( k^{\frac{(k-1)(2k)}{{2k^2 - k}}} \cdot N^{-\frac{k}{2k^2 - k}} \right)\\
\eps &= \Theta\left( k \cdot N^{-\frac{1}{2k - 1}} \right),
\end{align*}
which is the setting in our main result discussed in the introduction.
To then calculate our lower bound, by Lemma \ref{lem:hweight} we have
\begin{align*}
\ell(H) &= \Omega_k\left(\eps^{\frac{1}{k-1}} \cdot N^{\frac{1}{k-1}} \right)\\
        &= \Omega_k\left(\left( N^{-\frac{1}{2k-1}}\right)^{\frac{1}{k-1}} \cdot N^{\frac{1}{k-1}} \right)\\
        &= \Omega_k\left(N^{\frac{1}{k}} \cdot N^{\frac{1}{k-1} - \frac{1}{k} - \frac{1}{(2k-1)(k-1)}} \right)\\
        &= \Omega_k\left(N^{\frac{1}{k}} \cdot N^{\frac{1}{k(2k-1)}} \right).
\end{align*}
This bound is the one claimed in Corollary \ref{cor:fixedk}.
Finally, since $\eps = \Theta_k(N^{-\frac{1}{2k-1}})$, we can rewrite this bound as
$$\Omega_k\left(N^{\frac{1}{k}} \cdot \eps^{-\frac{1}{k}} \right),$$
completing the proof.

\section{Proof of Technical Lower Bound}

We now prove Theorem \ref{thm:techlb}, establishing limitations to improving our lower bound further.
For this proof, let $\gamma(n, 2k)$ denote the maximum possible number of edges in an $n$-node graph of girth $>2k$.
So the girth conjecture with parameter $k$ states that

$$\gamma(n, 2k) \ge \Omega(n^{1+1/k}).$$
Prior work on light spanners has established upper bounds that are \emph{independent} of the girth conjecture, in the sense that their lightness is a function of $\gamma$.
Specifically:
\begin{theorem} [\cite{ENS15, LS23}] \label{thm:exoptprior}
For all positive integers $n, k$ and all $\eps > 0$, every $n$-node graph has a $(1+\eps)(2k-1)$ spanner $H$ of lightness
$$\ell(H \mid G) \le O\left( \eps^{-1} \cdot \frac{\gamma(n, 2k)}{n}\right).$$
\end{theorem}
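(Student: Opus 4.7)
My approach is to reduce to a structural lightness bound on graphs of high weighted girth using the Elkin--Neiman--Solomon reduction cited earlier, then bucket edges by weight and apply $\gamma(n,2k)$ as a black-box Moore-type input on each bucket. The intuition is that every known upper bound proof for light spanners invokes the unweighted Moore bound $O(n^{1+1/k})$ at a single point; replacing that invocation with $\gamma(n,2k)$ propagates through the analysis and gives a bound that is ``existentially'' tight relative to whatever the truth of the girth conjecture turns out to be.

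\textbf{Reduction step.} By the ENS14 theorem, every graph admits a $(1+\eps)(2k-1)$-spanner $H$ of weighted girth strictly greater than $(1+\eps)(2k-1)+1$, so it suffices to show that any such $H$ satisfies $\ell(H) \le O(\eps^{-1} \cdot \gamma(n,2k)/n)$. Fix such an $H$, scale so the lightest non-$\mst$ edge has weight $1$, and fix a bucket base $\beta := 1+\Theta(\eps/k)$. Partition the non-$\mst$ edges of $H$ by weight into geometric buckets $E_j := \{e : \beta^j \le w(e) < \beta^{j+1}\}$, so that all edges within a single bucket have nearly equal weight.

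\textbf{Per-bucket girth via $\mst$-contraction.} For bucket $E_j$ with weight level $W_j := \beta^j$, contract the $\mst$ of $H$ along a maximal forest of subtrees each of total weight $\le \alpha W_j$ for a small absolute constant $\alpha$, producing a multigraph $H_j$ on $n_j = O(w(\mst)/(\alpha W_j))$ supernodes. I claim the edges of $E_j$, viewed as edges of $H_j$, form a graph of \emph{unweighted} girth strictly greater than $2k$: any unweighted cycle of at most $2k$ such edges pulls back to a cycle in $H$ whose bucket-edge contribution is at most $2k \beta W_j$ and whose $\mst$-detour contribution is at most $2k \cdot \alpha W_j$, producing a cycle of normalized weight at most $(1+\beta - 1)(2k) + 2k\alpha \le (1+\eps)(2k-1)+1$ once constants are balanced, which would contradict the weighted girth hypothesis. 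Hence by definition $|E_j| \le \gamma(n_j, 2k)$, and since $\gamma(n,2k)/n$ is monotone non-decreasing (by a disjoint-union construction witnessing that $\gamma$ is super-additive), the total weight contributed by bucket $E_j$ is at most $\beta W_j \cdot \gamma(n_j, 2k) \le O(w(\mst)) \cdot \gamma(n,2k)/n$.

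\textbf{Summation and main obstacle.} Summing over buckets, the $\eps^{-1}$ factor enters via a standard $\mst$-charging argument: each $\mst$ edge of weight $w$ is ``close'' (in geometric weight distance) to only $O(\eps^{-1})$ buckets whose level $W_j$ is within a factor of $\poly(k)$ of $w$, and only those buckets can be charged to it; buckets much lighter or much heavier than any $\mst$ edge either are empty or are absorbed into a single aggregate bucket with a weaker direct bound. The main obstacle is the calibration in the previous paragraph: the contraction radius $\alpha W_j$ must be large enough that the contracted graph has only $O(w(\mst)/W_j)$ supernodes, yet small enough that the weighted girth hypothesis still rules out $2k$-cycles after contraction. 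All remaining steps are bookkeeping around the black-box bound $\gamma(n,2k)$, which is exactly why the same analysis transfers unconditionally.
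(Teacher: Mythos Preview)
The paper does not actually prove Theorem~\ref{thm:exoptprior}; it is stated as a citation of prior work \cite{ENS14, LS23} and is used only as a black box in the proof of Theorem~\ref{thm:techlb}. So there is no ``paper's own proof'' to compare against.

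That said, your sketch is broadly in the spirit of the cited references: the ENS reduction to a high-weighted-girth subgraph, geometric bucketing of non-$\mst$ edges by weight, $\mst$-contraction to reduce the node count per bucket, and an unweighted-girth argument invoking $\gamma(\cdot,2k)$ as a black box are indeed the moving parts of those proofs. Two points of caution, though. First, with bucket base $\beta = 1+\Theta(\eps/k)$ there are $\Theta(k/\eps)$ buckets per constant-factor weight range, not $O(\eps^{-1})$, so your charging sentence as written would yield an extra factor of $k$; getting the sharp $\eps^{-1}$ dependence (with no $k$-loss) is precisely the contribution of \cite{LS23} and requires a more careful hierarchical or potential-based argument than the per-bucket sum you describe. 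Second, the normalized-weight inequality in your contraction step does not quite close as stated: with a max-weight edge of at least $W_j$ and total cycle weight at most $2k\beta W_j + 2k\alpha W_j$, the normalized weight bound is $2k(\beta+\alpha)$, and you need this to exceed $(1+\eps)(2k-1)+1 = 2k + (2k-1)\eps$, which forces $\alpha + (\beta-1) > (1-\tfrac{1}{2k})\eps$; this is fine but is the opposite direction of the constraint you wrote (you need the cycle's normalized weight to be \emph{large} to derive a contradiction from its existence, not small). These are repairable, but the full argument is genuinely delicate, which is presumably why the present paper simply cites it.
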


We will use this in our proof of Theorem \ref{thm:techlb}.

\begin{proof} [Proof of Theorem \ref{thm:techlb}]
Our proof of the girth conjecture is by induction on $k$; we may incur constant factors at each level of the induction, since its depth is limited to $k$ and we treat $k$ as a constant.
The base case is $k=2$, where the girth conjecture is already proved \cite{Wenger91}.

In the inductive step, since Theorem \ref{thm:main} is only conditional on the girth conjecture with parameter $k-1$, by the inductive hypothesis we may apply it in our construction to get an (unconditional) lower bound on lightness of 
$$\ell(H) \ge \Omega\left( \eps^{-1/k} n^{1/k} \right).$$
Comparing this lower bound to the upper bound in Theorem \ref{thm:exoptprior}, we must have
$$ \Omega\left( \eps^{-1/k} n^{1/k} \right) \le O\left( \eps^{-1} \cdot \frac{\gamma(n, 2k)}{n}\right).$$
Now, if we assume that Theorem \ref{thm:main} extends to a parameter regime where $\eps$ is a constant (perhaps depending on $k$), then we may set $\eps$ to be a constant and rearrange to get
$$\Omega\left(n^{1+1/k}\right) \le \gamma(n, 2k),$$
implying the girth conjecture for parameter $k$.
Alternately, if we can improve the left-hand side dependence from $\eps^{-1/k}$ to $\eps^{-1}$, then the $\eps$ terms cancel and rearranging in the same way again implies the girth conjecture for parameter $k$, completing the inductive step.
\end{proof}

\bibliographystyle{plain}
\bibliography{refs}
\end{document}